\newtheorem{theorem}{Theorem}
\acrodef{dl}[DL]{deep learning}
\acrodef{mccr}[MCCR]{maximum channel coding rate}
\acrodef{mbep}[MBEP]{maximum block error probability}
\acrodef{dnn}[DNN]{deep neural network}
\acrodef{sic}[SIC]{successive interference cancellation}
\acrodef{nn}[NN]{neural network}
\acrodef{pdf}[PDF]{probability density function}
\acrodef{pmf}[PMF]{probability math function}
\acrodef{sto}[STO]{symbol timing offset}
\acrodef{isi}[ISI]{intersymbol interference}
\acrodef{awgn}[AWGN]{additive white Gaussian noise}
\acrodef{iot}[IoT]{Internet of Things}
\acrodef{cdma}[CDMA]{code division multiple access}
\acrodef{bs}[BS]{base station}
\acrodef{bpsk}[BPSK]{binary phase-shift keying}
\acrodef{dsss}[DS-SS]{direct-sequence spread spectrum}
\acrodef{sssr}[SSSR]{simultaneous sparse signal reconstruction}
\acrodef{tdd}[TDD]{time-delay diversity}
\acrodef{tddssa}[TDD-SSA]{time-delay diversity sparse signal approximation}
\acrodef{kkt}[KKT]{Karush-Kuhn-Tucker}
\acrodef{map}[MAP]{maximum a posteriori probability}
\acrodef{somp}[S-OMP]{Simultaneous Orthogonal Matching Pursuits}
\acrodef{ls}[LS]{least square}
\acrodef{mud}[MUD]{multiuser detection}
\acrodef{mse}[MSE]{mean squared error}
\acrodef{ber}[BER]{bit error rate}
\acrodef{wcss}[WCSS]{within-cluster sum of squares}
\acrodef{sb}[SB]{Symbol-Based}
\acrodef{rd}[RD]{ridge regression}
\acrodef{ssr}[SSR]{sparse signal reconstruction}
\acrodef{iid}[i.i.d.]{independent and identically distributed}
\acrodef{ls}[LS]{least-squares}
\acrodef{mse}[MSE]{mean-squared error}
\acrodef{2mc}[2MC]{2-mean clustering}
\acrodef{sae}[SAe]{sparsity-aware}
\acrodef{wcss}[WCSS]{least within-cluster sum of squares}
\acrodef{pb}[PB]{Packet-Based}
\acrodef{pmf}[PMF]{probability mass function}
\acrodef{cv}[CV]{cross-validation}
\acrodef{mmse}[MMSE]{minimum mean squared error}
\acrodef{snr}[SNR]{signal-to-noise ratio}
\acrodef{cvpl}[CVPL]{cross-validated partial likelihood}
\acrodef{sdl}[SDL]{sparse dictionary learning}
\acrodef{mod}[MOD]{method of optimal directions}
\acrodef{stls}[S-TLS]{sparse-total least square}
\acrodef{ftn}[FTN]{faster-than-Nyquist}
\acrodef{dof}[DoF]{degrees of freedom}
\acrodef{psd}[PSD]{power spectral density}
\acrodef{cv}[CV]{cross validation}
\acrodef{mlr}[MLR]{maximum likelihood ratio}
\acrodef{gcv}[GCV]{generalized cross validation}
\acrodef{cdf}[CDF]{cumulative distribution function}
\acrodef{pls}[P-LS]{penalized-LS}
\acrodef{ldpc}[LDPC]{low-density parity-check}
\acrodef{dc}[DC]{Decision Combining}
\acrodef{ec}[EC]{Estimate Combining}
\acrodef{roc}[ROC]{receiver operating characteristic}
\acrodef{mtc}[MTC]{machine-type communications}
\acrodef{ma}[MA]{multiple access}
\acrodef{mac}[MAC]{media access control}
\acrodef{phy}[PHY]{physical}
\acrodef{ra}[RA]{random access}
\acrodef{fcc}[FCC]{fading channel coefficients}
\acrodef{cfo}[CFO]{carrier frequency offset}
\acrodef{ble}[BLE]{Bluetooth low-energy}
\acrodef{rfid}[RFID]{radio frequency identification}
\acrodef{csma}[CSMA]{carrier sensing multiple access}
\acrodef{ca}[CA]{collision avoidance}
\acrodef{lte}[LTE]{Long-Term Evolution}
\acrodef{rfid}[RFID]{radio frequency identification}
\acrodef{fsa}[FSA]{frame slotted ALOHA }
\acrodef{lpwa}[LPWA]{low-power wide area}
\acrodef{rftdm}[R-FTDM]{random frequency-time division multiplexing}
\acrodef{rpma}[RPMA]{random phase multiple access}
\acrodef{cdma}[CDMA]{code division multiple access}
\acrodef{lbt}[LBT]{listen-before-talk}
\acrodef{ss}[SS]{spread spectrum}
\acrodef{bch}[BCH]{Bose–Chaudhuri–Hocquenghem}
\acrodef{ap}[AP]{Access Point}
\acrodef{3gpp}[3GPP]{3rd Generation Partnership Project}
\acrodef{nb}[NB]{narrowband}
\acrodef{fdma}[FDMA]{frequency division multiple access}
\acrodef{rach}[RACH]{Random Access Channel}
\acrodef{fa}[FA]{fixed assignment}
\acrodef{tdma}[TDMA]{time-division multiple access}
\acrodef{ds}[DS]{direct sequence}
\acrodef{dsa}[DSA]{device sparsity-aware}
\acrodef{pdsa}[PDSA]{packet-device sparsity-aware}
\acrodef{pts}[PTS]{packet transmission state}
\acrodef{mld}[MLD]{maximum likelihood decoding}
\acrodef{cb}[CB]{contention-based}
\acrodef{css}[CSS]{chirp spread spectrum}
\acrodef{mf}[MF]{matched filter}
\acrodef{ltem}[LTE-M]{Long Term Evolution for Machines}
\acrodef{cp}[CP]{carrier phase}
\acrodef{per}[PER]{packet error rate}
\acrodef{dcd}[DCD]{differentially coherent decorrelation}
\acrodef{ca}[CA]{code-aided}
\acrodef{adam}[Adam]{adaptive moment estimation}
\acrodef{mimo}[MIMO]{multiple-input multiple-output}
\acrodef{nl}[NC]{nulling and cancelling}
\acrodef{spi}[SD-IRS]{sphere decoding with increasing radius search}
\acrodef{urllc}[URLLC]{ultra-reliable and low-latency communications}
\begin{document}
% paper title
% Titles are generally capitalized except for words such as a, an, and, as,
% at, but, by, for, in, nor, of, on, or, the, to and up, which are usually
% not capitalized unless they are the first or last word of the title.
% Linebreaks \\ can be used within to get better formatting as desired.
% Do not put math or special symbols in the title.
\title{ \vspace{-0.45em} \LARGE{Efficient Massive Machine Type Communication (mMTC) via AMP}}
\author{Mostafa~Mohammadkarimi~\IEEEmembership{Member,~IEEE},
        %John~Doe,~\IEEEmembership{Fellow,~OSA,}
        and~Masoud~Ardakani,~\IEEEmembership{Senior~Member,~IEEE}% <-this % stops a space

%\author{Mostafa Mohammadkarimi
%        %John~Doe,~\IEEEmembership{Fellow,~OSA,}
%        and~Masoud Ardakani
%\thanks{M. Shell was with the Department
%of Electrical and Computer Engineering, Georgia Institute of Technology, Atlanta,
%GA, 30332 USA e-mail: (see http://www.michaelshell.org/contact.html).}% <-this % stops a space
\thanks{This work was supported by the Natural Sciences
and Engineering Research Council of Canada (NSERC).}
\thanks{M. Mohammadkarimi is with
the
Department of Microelectronics at
Delft University of Technology, 2628 CD Delft, the Netherlands (e-mail: m.mohammadkarimi@tudelft.nl).}% <-this % stops a space
\thanks{M. Ardakani is with the Department of Electrical and Computer Engineering at University of Alberta, Edmonton, AB T6G 2R3, Canada (e-mail:
ardakani@ualberta.ca.}
}

% note the % following the last \IEEEmembership and also \thanks -
% these prevent an unwanted space from occurring between the last author name
% and the end of the author line. i.e., if you had this:
%
% \author{....lastname \thanks{...} \thanks{...} }
%                     ^------------^------------^----Do not want these spaces!
%
% a space would be appended to the last name and could cause every name on that
% line to be shifted left slightly. This is one of those "LaTeX things". For
% instance, "\textbf{A} \textbf{B}" will typeset as "A B" not "AB". To get
% "AB" then you have to do: "\textbf{A}\textbf{B}"
% \thanks is no different in this regard, so shield the last } of each \thanks
% that ends a line with a % and do not let a space in before the next \thanks.
% Spaces after \IEEEmembership other than the last one are OK (and needed) as
% you are supposed to have spaces between the names. For what it is worth,
% this is a minor point as most people would not even notice if the said evil
% space somehow managed to creep in.

% The paper headers
\markboth{{This paper published in  IEEE Wireless Communications Letters (DOI: 10.1109/LWC.2023.3256300).}}
{Shell \MakeLowercase{\textit{et al.}}: Bare Demo of IEEEtran.cls for IEEE Journals}
% The only time the second header will appear is for the odd numbered pages
% after the title page when using the twoside option.
%
% *** Note that you probably will NOT want to include the author's ***
% *** name in the headers of peer review papers.                   ***
% You can use \ifCLASSOPTIONpeerreview for conditional compilation here if
% you desire.

% If you want to put a publisher's ID mark on the page you can do it like
% this:
%\IEEEpubid{0000--0000/00\$00.00~\copyright~2015 IEEE}
% Remember, if you use this you must call \IEEEpubidadjcol in the second
% column for its text to clear the IEEEpubid mark.

% use for special paper notices
%\IEEEspecialpapernotice{(Invited Paper)}

% make the title area
\maketitle

% As a general rule, do not put math, special symbols or citations
% in the abstract or keywords.
\begin{abstract}
We propose efficient and low-complexity \ac{mud} algorithms for Gaussian multiple access channel (G-MAC)
for short-packet transmission in massive machine type communications.
%We formulate the problem of MUD for the G-MAC as sparse signal recovery with virtual sparsity induced by one-hot vector transmission of devices in code domain.
%While design of Bayesian MUD by using the
%the joint prior distribution
%of the vector to be recovered in the problem at hand for long packets results in an infeasible MUD, we observe that efficient and low complexity Bayesian MUD can be obtained for short packets.
To do so, we first formulate the G-MAC MUD problem as a sparse signal recovery problem and
obtain the exact and approximate joint
prior distribution of the sparse vector to be recovered. Then, we employ
the Bayesian approximate message passing (AMP) algorithms with the optimal separable and non-separable \ac{mmse} denoisers for soft decoding of the sparse vector.
%Through a simple mapping of the recovered sparse vector, the corresponding transmitted Gaussian codes are obtained.
%Taking into account the induced sparsity leads to a known prior distribution on the transmit vector; hence, the optimal \ac{mmse} denoiser can be employed.
%We consider both separable and non-separable \ac{mmse} denoisers for soft  AMP  decoding.
The effectiveness of the proposed MUD algorithms for
a large number of devices
is supported by simulation results.
For packets of $8$ information bits, while the state-of-the-art AMP with soft-threshold denoising achieves $8/100$ of the upper bound at $E_b/N_0=4$ dB,
the proposed algorithms reach $4/7$ and $1/2$ of the upper bound.

\end{abstract}

% Note that keywords are not normally used for peerreview papers.
\begin{IEEEkeywords}
Multiuser detection, approximate message passing, IoT, Bayesian MMSE denoiser, sparse recovery, short packet.
\end{IEEEkeywords}
\acresetall
\vspace{-2.2em}
\IEEEpeerreviewmaketitle
\section{Introduction}
In many \ac{iot} applications, massive machine-type communications (mMTC) services  are  required,  where  a large number of devices  transmit  very  short  packets to a \ac{bs}. Typically, these short  packets  are transmitted whenever a measured value changes; thus, sporadic traffic pattern is popular in mMTC \cite{mohammadkarimi2018signature,mohammadkarimi2021massive,shiv2022learning}.
Sparse signal processing for
massive connectivity including data and activity detection has been studied in \cite{shiv2022learning,liu2018sparse,di2020detection}.
 %expected  to  carry  critical payload
%, th
%
%transmitted to the BS, whenever a measured value changes.  Hence,    in % mMTC \cite{bockelmann2016massive}.
%Typically,  the number of IoT devices assigned to each base station (BS) in mMTC  is
%significantly lower than
% what  current  communication  networks  are  capable  to  support.
%  Moreover,  IoT
%devices  do  not  transmit  continuously,  rather  updates  are  infrequently transmitted to the BS, whenever a measured value changes.  Hence,  small  packets  are  expected  to  carry  critical payload  in % mMTC \cite{bockelmann2016massive}.

Recently, bounds on the exact block
error probability of Gaussian random codes in \ac{awgn} for massive uplink multiple access (MA) at finite message length have been derived \cite{zadik2019improved}.
The author in \cite{zadik2019improved} has shown that
a promising solution to support massive uplink connectivity for short packet transmission is the Gaussian multiple access channel (G-MAC).
The challenging task in the realization of G-MAC is
the design of efficient and low-complexity
multiuser detection (MUD) algorithms at the BS \cite{choi2019noma}.
In \cite{muller2021massive}, the iterative soft interference cancellation (SIC) decoding has been employed for the G-MAC with short packet transmission, but it suffers from error propagation.
%to achieve the desirable spectral efficiency and reliability for short-packet transmission \cite{choi2019noma}.

In this paper, we propose two efficient and low-complexity MUD algorithms for the G-MAC for short-packet mMTC.
We first formulate the MUD problem as an under-determined sparse signal recovery problem.
Then, we propose to employ approximate message passing (AMP) for soft decoding of the short packets transmitted by the IoT devices in the network.
%The flexibility in the AMP algorithm design is attributed to the denoiser ${\bm{\eta}}_t(\cdot)$.
%The denoisers in the AMP variants are typically designed  to  reduce  the  estimation  error  at  each  iteration \cite{rangan2011generalized}.
In most MUD problems using AMP for soft decoding, the joint prior distribution
of the vector to be recovered is generally assumed to be unknown.
In such cases, the optimal \ac{mmse} denoiser cannot be employed, and thus;
the denoiser is designed under the minimax framework to
optimize the AMP algorithm performance for the worst-case
or least-favorable distribution, leading to
a soft-thresholding denoiser.
However, in the problem at hand, the key idea is that we can obtain the exact and approximate prior joint distribution of the sparse vector to be recovered.  Thus,
the optimal separable and non-superable \ac{mmse} denoisers can be employed for Bayesian AMP soft decoding.
%By taking into account the exact and asymptotic prior distribution of the transmit vector, we employ separable and non-separable \ac{mmse} denoisers for Bayesian  AMP.
The decoupling effect results in low-complexity MMSE denoising in each AMP iteration \cite{mayer2015bayesian,ma2019approximate,rangan2011generalized}.
The contributions of the paper are:
\begin{itemize}
\item
We formulate the problem of G-MAC MUD as a sparse signal recovery problem and obtain the
exact and approximate prior joint distribution of the sparse vector to be recovered. The approximate prior distribution is obtained by minimizing the Kullback-Leibler divergence (KLD) as a measure of difference between probability densities.
\item
By taking into account the exact and the KLD-based approximate prior distribution of the sparse vector, we introduce two MUD algorithms called (i)
the relaxed block sparsity AMP (RBS-AMP), and (ii)
the block sparsity AMP (BS-AMP)  MUD algorithms for G-MAC suitable for short packet transmission in mMTC.
\item
The multiuser efficiency (MUE) of the proposed MUD algorithms is derived and shown to be close to one.
\item
We show that our BS-AMP MUD algorithm results in a new achievability bound  that is closer to the upper bound in \cite{zadik2019improved} compared to the achievability bound of the RBS-AMP, the state of the art AMP soft decoding in \cite{donoho2009message}, and the SIC decoding in \cite{muller2021massive}.
\end{itemize}

{\it Notation:}
The symbols $(\cdot)^{\rm T} $, ${\rm Tr}(\cdot)$, $\| \cdot\|_0$, and $\|\cdot\|$
show  the transpose operator, trace of a matrix, $0$-norm, and Euclidean norm of a vector, respectively.
%$\mathbb{E}\{\cdot\}$ is the statistical expectation, and
The symbol $\hat{\bf{x}}$ is an estimate of vector $\bf x$, and
%The symbols $\cap$ and $\cup$ denote the set intersection and union operators, respectively.
%The Gaussian distribution with mean vector $\bm{\mu}$ and covariance matrix $\bf{\Sigma}$ is denoted by ${\cal{N}}\big{(}{\bm{\mu}},{\bf{\Sigma}}\big{)}$,
$\delta (\cdot)$ is the Dirac delta function.
\vspace{-0.5em}
\section{System Model}\label{sec:intro}
We consider an \ac{iot} network equipped with open loop power control (OLPC), where $D$ IoT devices simultaneously communicate with a single \ac{bs}. In the OPLC, there is dedicated pilot channel provided for channel estimation. Pilot signal is transmitted by the BS to all \ac{iot} devices in the network, and the \ac{iot} devices receive the pilot signal and estimate the power strength. Based on this estimate, each \ac{iot} device adjusts its transmit power accordingly to compensate the effect of channel \cite{muller2021massive}. In the uplink channel,
each IoT device transmits
$K$ bits of information by using Gaussian random coding with codewords of length $M=nD$.
The aggregate rate of the multiuser coded system is $DK / M$.
For short packet transmission, the received vector at the \ac{bs} in vector form is
\begin{align}\label{Equation_2}
{\bf y}={\bf C}{\bf x}+{\bf w},
\end{align}
where ${\bf C} \in \mathbb{R}^{M \times N}$ has
\ac{iid} entries, $C_{ij} \sim \mathcal{N} (0,1/M)$, $N  \triangleq  2^KD$, and
${\bf x}$ is a block sparse vector of length  $N\times 1$ with the sparsity level $D$ as
\begin{align}\label{eq3}
{\bf{x}}=\big{[}x_1 \ x_2 \ \ldots, x_N \big{]}^{\rm T} =\big{[}{\bf{x}}_1^{\rm T} \ {\bf{x}}_2^{\rm T} \dots \ {\bf{x}}_D^{\rm T} \big{]}^{\rm T},
\end{align}
where
${\bf{x}}_q$ is a $2^K \times 1$
vector as $|| {\bf{x}}_q||_0=1, q=1,2,\dots, D$.
%\begin{align}\label{eq4}
%|| {\bf{x}}_q||_0=1, \,\,\,\,\,\,\,\,\   q=1,2,\dots, D,
%\end{align}
The $M \times 1$ vector ${\bf w}$ in \eqref{Equation_2} is the \ac{awgn} vector with covariance matrix $\sigma_{\rm w}^2 \bf{I}$.
We assume that $ \beta \triangleq \mathop {\lim }\limits_{D \to \infty}
M/N   = n/(2^K)  >0$.

%$\mathop {\lim }\limits_{x \to \infty }$

Because of the OLPC, the received power of the $D$ \ac{iot} devices at the \ac{bs} is fixed and is denoted by $P_{\rm{I}}$. In this case,
the joint prior distribution of the transmit
vector ${\bf x}_q \triangleq [x_{q,1},x_{q,2},\dots, x_{q,2^{K}}]^{\rm T}$, $q=1,2,\dots, D$, in \eqref{eq3} is given by\footnote{The corresponding probability mass function (PMF)  of the PDF  in \eqref{qpdf} is $P\{{\bf x}_q=\sqrt{{P}_{\rm I}}{\bf e}_k\}=\frac{1}{2^K}$, $k=1,2,\ldots,2^K$.
Since for the problem at hand, we have both continues and discrete random variables,  Dirac’s delta function is used to link discrete PMF to PDF.}
\begin{align}\label{qpdf}
f_{\bf X}({\bf x}_q;P_{\rm{I}},K)= \sum_{k=1}^{2^K} \frac{1}{2^K}\delta({\bf x}_q -\sqrt{P_{\rm{I}}}{\bf e}_k),
\end{align}
where ${\bf e}_1,{\bf e}_2, \cdots, {\bf e}_{2^K}$ are orthonormal basis in right-handed Cartesian coordinate system. %i.e.,
%${\bf e}_k\triangleq [e_{1} \ e_{2} \ \cdots \ e_{2^K}]^T =[0 \ 0 \ \cdots \ 0 \ 1 \ 0 \ \cdots \ 0]^T$ in which the $k$th element is one and the remaining elements are zero.
%, and $\delta ({\bf x}_q)=\prod_{k=1}^{2^K} \delta (x_{q,k})$.
Our goal is to design MUD algorithms to detect the $K$ information bits transmitted  by each \ac{iot} device.
As ${\bf x}$ is a block-sparse vector, we have a sparse signal reconstruction problem.
In this paper, we
adopt the low-complexity Bayesian AMP algorithms for signal recovery.
%This is equivalent to reconstruct the vector ${\bf x}$ such that the condition in \eqref{eq4} is satisfied.
\vspace{-0.2em}
\section{Review of AMP}
In this section, we first briefly review AMP
and its state evolution (SE) for the separable denoiser.
%that can be employed for G-MAC with short-packet transmission. The flexibility in the AMP algorithm design is attributed to the denoiser ${\bm{\eta}}_t(\cdot)$.
%For most problems, the prior distribution
%of $\bf x$ in \eqref{eq3} is generally assumed to be unknown. In this case, the denoiser is designed under the minimax framework to
%optimize the AMP algorithm performance for the worst-case
%or least-favorable distribution of $\bf x$, leading to
%a soft-thresholding denoiser. However, in the problem at hand, we have the exact and asymptotic empirical prior distribution of $\bf x$. Hence,
%the optimal \ac{mmse} denoiser can be employed in AMP. The decoupling effect in \eqref{Decouple} and  \eqref{ddnlawqm} results in low-complexity MMSE denoising in each AMP iteration. Moreover, Theorems~1 and 2 enables us to derive the MUE for AMP-based MUD algorithms.
%Fast iterative-thresholding algorithms, such as AMP, have been intensively studied as alternatives to convex optimization for large-scale problems.
%Iterative-thresholding algorithms are inspired by belief propagation in graphical models, where
%their
%foundation is Gaussian loopy belief propagation [10] with
%simplified message passing that assumes high dimensional
%signal vectors to factorize  a multivariate distribution.
% AMP has been shown to be a superior method for inference problems, such as the recovery of signals from sets of noisy, lower-dimensionality measurements.
%
\subsection{AMP Algorithm with Separable Denoiser}
For noisy linear measurements of the form
\begin{align}\label{Equation_5}
{\bf s}={\bf A}{\bf u}+{\bf w},
\end{align}
where ${\bf A} \in \mathbb{R}^{M \times N}$, ${\bf s} \in \mathbb{R}^{M \times 1}$, ${\bf u} \in \mathbb{R}^{N \times 1}$, and ${\bf w} \in \mathbb{R}^{M \times 1}$.
AMP algorithm with denoiser ${\bm{\eta}}_t$: $\mathbb{R}^N\rightarrow \mathbb{R}^N$ at the $(t+1)$th iteration estimates vector $\bf{u}$ as follows \cite{berthier2020state}
\begin{subequations}
\begin{align}\label{Equation_6}
\hat{\bf u}^{t+1}&= {\bm{\eta}}_t ({\bf v}^{t})= {\bm{\eta}}_t (\hat{\bf{u}}^{t} + {\bf A}^{\rm T} {\bf r}^{t}), \\ \label{Equation_7}
{\bf r}^t&={\bf s}-{\bf A}\hat{\bf{u}}^t+\frac{N}{M}{\bf r}^{t-1}{\rm div}\big{(}{\bm{\eta}}_{t-1} ({\bf v}^{t-1})\big{)},  \\
 {\bf v}^t&= \hat{\bf{u}}^t + {\bf A}^{\rm T} {\bf r}^t, \label{Equation_8}
\end{align}
\end{subequations}
%\begin{align}\label{Equation_6}
%\hat{\bf u}^{t+1}= {\bm{\eta}}_t ({\bf v}^{t})= {\bm{\eta}}_t (\hat{\bf{u}}^{t} + {\bf A}^T{\bf r}^{t}),
%\end{align}
%\begin{align}\label{Equation_7}
%{\bf r}^t={\bf s}-{\bf A}\hat{\bf{u}}^t+\frac{N}{M}{\bf r}^{t-1}{\rm div}\big{(}{\bm{\eta}}_{t-1} ({\bf v}^{t-1})\big{)},
%\end{align}
%\begin{align}\label{Equation_8}
%{\bf v}^t= \hat{\bf{u}}^t + {\bf A}^T{\bf r}^t,
%$\end{align}
where ${\rm div}\big{(}{\bm{\eta}}_t ({\bf g})\big{)} \triangleq \frac{1}{N} {\rm Tr} \Big{(}\frac{\partial{\bm{\eta}}_t({\bf g})}{\partial {\bf g}} \Big{)}$
%\begin{align}\label{Equation_9}
%{\rm div}\big{(}{\bm{\eta}}_t ({\bf g})\big{)} \triangleq \frac{1}{N} {\rm Tr} \Big{(}\frac{\partial{\bm{\eta}}_t({\bf g})}{\partial {\bf g}} \Big{)}
%\end{align}
with ${\bf{g}} \triangleq [g_1 \ g_2 \dots g_N]^{\rm T}$,
$\hat{\bf u}^0=\bf{0}$, and ${\bf  r}^{0}=\bf{s}$.
%For a set of iid Gaussian matrices $\bf{A}$ and a and Lipschitz continuous (LC) denoisers\footnote%A function ${\bf f}(\cdot):  \mathbb{R}^n \rightarrow  \mathbb{R}^n$ is Lipschitz continuous if there exists a $C>0$ such that
%$\big{\|}{\bf{f}}({\bf{r}}_2)-{\bf{f}}({\bf{r}}_1)\|  \leq
%C \big{\|} {\bf{r}}_2-{\bf{r}}_1\big{\|}$
%for any ${\bf{r}}_1$, ${\bf{r}}_2\in \mathbb{R}^n$.} ${\bm{\eta}}_t$: $\mathbb{R}^N\rightarrow \mathbb{R}^N$ (separable and non-separable),  AMP algorithm displays  decoupling effect.
%Lipschitz continuous functions are continuous, and are almost everywhere continuously differentiable with a bounded derivative.
For identical separable Lipschitz  continuous (LC) denoiser, ${\bm{\eta}}_t ({\bf g})=\big{[}{\eta}_t ({g_1}) \ {\eta}_t ({g_2}) \ \dots \ {\eta}_t ({g_N})\big{]}^{\rm T}$,
%\begin{align}\label{equation_11}
%{\bm{\eta}}_t ({\bf g})=\big{[}{\eta}_t ({g_1}) \ {\eta}_t ({g_2}) \ \dots \ {\eta}_t ({g_N})\big{]}^T,
%\end{align}
and ${\rm div}\big{(}{\bm{\eta}}_t ({\bf g})\big{)}= \frac{1}{N} \sum_{j=1}^{N} {\partial {\eta}_t (g_j)}/{\partial g_j}.$
%\begin{align}\label{eq10}
%{\rm div}\big{(}{\bm{\eta}}_t ({\bf g})\big{)}= \frac{1}{N} \sum_{j=1}^{N} \frac{\partial {\eta}_t (g_j)}{\partial g_j}.
%\end{align}
Similarly, for non-separable LC denoisers, we have \cite{berthier2020state}
\begin{align}\label{equation_13}
{\bm{\eta}}_t ({\bf g})=\big{[}{\bm \eta}_{t,1} ({{\bf g}_1}) \ {\bm \eta}_{t,2} ({{\bf g}_2}) \ \dots \ {\bm \eta}_{t,L} ({{\bf g}_L})\big{]}^{\rm T},
\end{align}
and
\begin{align}\label{Equation_14}
{\rm div}\big{(}{\bm{\eta}}_t ({\bf g})\big{)}= \frac{1}{N} \sum_{m=1}^{L}{\rm Tr} \Big{(}\frac{\partial{\bm{\eta}}_{t,m}({\bf g}_m)}{\partial {\bf g}_m} \Big{)},
\end{align}
where ${{\bf g}_1}, {{\bf g}_2}, \cdots, {{\bf g}_L}$ denote $L$ non-overlapping partitions of ${\bf g}$.
%$({{\bf g}_1}\cup {{\bf g}_2}\cup \dots \cup {{\bf g}_L})={\bf g}$, $({{\bf g}_1}\cap {{\bf g}_2}\cap \dots \cap {{\bf g}_L})=\emptyset$, and $L$ is the number of block sparse signal that forms ${\bf g}$.
The initial work on AMP focused on the case where $\bm{\eta}_t(\cdot)$  is
a separable elementwise function with identical components \cite{donoho2009message}, while the later work
allowed non-separable  denoiser \cite{berthier2020state}.

For large, i.i.d., sub-Gaussian random matrices $\bf A$ in \eqref{Equation_5},\footnote{${E}\{A_{ij}\}=0$, ${E}\{A_{ij}^2\}=\frac{1}{M}$, and ${E}\{A_{ij}^6\}=\frac{C}{M}$, for some fixed $C>0$.}  $\lim_{M,N\to\infty} M/N  \triangleq \beta  >0$, and
identical separable/non-separable LC denoiser, AMP displays decoupling behavior, meaning that ${\bf v}^t$ in \eqref{Equation_8} behaves like an \ac{awgn} corrupted version of the true signal $\bf{u}$ in \eqref{Equation_5} as \cite{rush2018finite}
\begin{align}\label{Decouple}
{\bf v}^t= {\bf{u}} + {\cal{N}}(0,\sigma_t^2 {\bf I}),
\end{align}
where an estimate of $\sigma_t^2$ can be obtained as $\hat{\sigma}_t^2 = \frac{1}{M} \|{\bf{r}}^t\|$.
The larger $M$, the lower the estimation error.
%The main idea behind our solution is that for large \ac{iid} sub-Gaussian random matrices
\vspace{-1em}
\subsection{SE of AMP with Separable Denoiser}
For large, \ac{iid}, sub-Gaussian random matrices $\bf A$, the
performance of the AMP can be exactly predicted by a scalar SE, which also provides
testable conditions for optimality.
For identical separable LC denoiser $\eta_t(\cdot)$, when the elements of ${\bf{u}}=[u_1, u_2, \dots , u_N]^{\rm T}$
empirically converges to some random variable $U$
with \ac{pdf}
$f_{U}$, the SE i.e., the sequence of $\{\sigma_t^2\}_{t \geq 0}$ in \eqref{Decouple} is expressed as \cite{bayati2011dynamics}
\begin{subequations}\label{yujkkk}
\begin{align}
\sigma^2_{t+1} &= \sigma_{\rm w}^2+\frac{1}{\beta}\mathbb{E}\big{\{}(\eta_t(U+\sigma_t Z)-U)^2\big{\}}, \label{Equation_18}
\\
\sigma^2_{0} &= \sigma_{\rm w}^2+\frac{1}{\beta}\mathbb{E}\big{\{}U^2\big{\}}, \label{Equation_19}
\end{align}
\end{subequations}
%\begin{align}\label{Equation_18}
%\sigma^2_{t+1} = \sigma_{\rm w}^2+\frac{1}{\beta}\mathbb{E}\big{\{}(\eta_t(U+\sigma_t Z)-U)^2\big{\}},
%\end{align}
%where $\beta= M/N$ and
%\begin{align}\label{Equation_19}
%\sigma^2_{0} = \sigma_{\rm w}^2+\frac{1}{\beta}\mathbb{E}\big{\{}U^2\big{\}},
%\end{align}
where $\beta\triangleq M/N$, $Z \sim {\cal  N}(0,1)$, and is independent from $U \sim f_U$.
The term $(\sigma^2_{t+1} - \sigma_{\rm w}^2)\beta$ in \eqref{Equation_18} can be estimated by the sample mean estimate of the denoiser  error as \cite{bayati2011dynamics} %it is expressed in Theorem \ref{Theorem_1}.
\begin{align}\label{Equation_23}
 \mathop {\lim }\limits_{N \to \infty } \frac{1}{N}  \big{\|}\hat{\bf{u}}^{t+1}-{\bf{u}}\big{\|}_2^2 & = \mathbb{E}\big{\{}(\eta_t(U+\sigma_t Z)-U)^2\big{\}} \\ \nonumber
& =(\sigma^2_{t+1} - \sigma_{\rm w}^2)\beta,
\end{align}
where $\hat{\bf u}^{t+1} \triangleq [\hat{u}_1^{t+1} \ \hat{u}_2^{t+1} \ \dots \ \hat{u}_N^{t+1}]^{\rm T}$ is obtained by using the AMP algorithm in \eqref{Equation_6}-\eqref{Equation_8}.
\vspace{-0.5em}
\section{RBS-AMP MUD Algorithm}
%\subsection{RBS-AMP MUD Algorithm with Separate Denoiser}
In this section, the separable MMSE  denoiser for
the approximate joint prior distribution of the sparse vector ${\bf x}$ is obtained and then is used for Bayesian AMP soft decoding.
This results in RBS-AMP MUD algorithm.
The MUE of the proposed RBS-AMP MUD algorithm is also derived.
\vspace{-1em}
\subsection{RBS-AMP MUD Algorithm with Separate Denoiser}
The RBS-AMP MUD algorithm
neglects the block-sparsity of $\bf{x}$ in \eqref{Equation_2} and considers an \ac{iid} distribution for its elements.
Let $\eta({\bf x}_q) \triangleq {\prod_{k = 1}^{2^K} f_{X}(x_{q,k};P_{\rm{I}};K)}$ denote an approximation of the joint prior distribution ${\gamma}({\bf x}_q)\triangleq {f_{\rm {\bf X}}({{\bf{x}}_q};{P_{\rm{I}}},K)}$ in \eqref{qpdf}.
By minimizing the KLD between the corresponding PMFs of $\gamma$ and $\eta$, we obtain
\begin{align}\label{Eq_PDF_DELTA}
f_{X}(x;P_{\rm{I}};K) = \frac{1}{2^K} \delta\big{(}x-\sqrt{P_{\rm {I}}}\big{)}+\Big{(}1-\frac{1}{2^K}\Big{)}\delta(x).
\end{align}
%where it is the PMF of the scaled Bernoulli distribution.
This block sparsity relaxation
results in a low complexity MUD algorithm because \eqref{qpdf} is replaced with the simple joint prior distribution $\eta({\bf x}_q)$. While  $\eta({\bf x}_q)$ allows the sparsity level of ${\bf{x}}_q$ to be larger than one, the separable MMSE denoiser can correctly decode the one-hot vector because sparsity level one still has the highest prior probability.

%\textcolor{blue}{
% The joint probability mass function (PPM) in can be approximated by
%  ${\prod\limits_{k = 1}^{{2^K}} f_{X}(x_{q,k};P_{\rm{I}};K)}$ by }
%of the block sparse vector  ${\bf{x}}=[x_1,x_2,\cdots,x_N]^T$ in \eqref{eq3} as $N\rightarrow \infty$, converges weakly to the scaled Bernoulli distribution with the \ac{pdf}  $f_{X}(x;P_{\rm{I}};K)$
%\begin{align}\label{Eq_PDF_DELTA}
%f_{X}(x;P_{\rm{I}};K) = \frac{1}{2^K} \delta\big{(}x-\sqrt{P_{\rm {I}}}\big{)}+\Big{(}1-\frac{1}{2^K}\Big{)}\delta(x).
%\end{align}
\setcounter{equation}{15}
\begin{figure*}
\vspace{-1.5em}
\begin{align}\nonumber
 \mathop {\lim }\limits_{N \to \infty } \frac{1}{N}  \big{\|}\hat{\bf{x}}^{t+1}-{\bf{x}}\big{\|}_2^2 &= \mathbb{E}\big{\{}(\eta_t(X+\sigma_t Z)-X)^2\big{\}}=
\frac{1}{2^K \sqrt{2\pi\sigma_t^2}}\int_{-\infty}^{\infty}
\Bigg{(}\frac{\sqrt{P_{\rm I}} pT_i^t}{pT_i^t+1-p}-\sqrt{P_{\rm{I}}}\Bigg{)}^2
\exp \Big{\{}-\frac{(v_i^t-\sqrt{P_{\rm{I}}})^2}{2\sigma_t^2} \Big{\}} {\rm d} v_i^t \\ \label{Equation_37}
&+
\Big{(}1-\frac{1}{2^K} \Big{)}\frac{1}{\sqrt{2\pi\sigma_t^2}}\int_{-\infty}^{\infty}
\Bigg{(}\frac{\sqrt{P_{\rm I}}pT_i^t}{pT_i^t+1-p}\Bigg{)}^2
\exp \Big{\{}-\frac{(v_i^t)^2}{2\sigma_t^2} \Big{\}} {\rm d} v_i^t = (\sigma^2_{t+1} - \sigma_{\rm w}^2)\beta.
\end{align}
\noindent\rule{19cm}{0.4pt}
\vspace{-2.2em}
\end{figure*}
Let $\hat{\bf{x}}=[\hat{x}_1^{t+1},\hat{x}_2^{t+1},\cdots,\hat{x}_N^{t+1}]^{\rm T}$ denote the reconstructed vector  by the AMP algorithm summarized in \eqref{Equation_6}-\eqref{Equation_8} at the $(t+1)$th iteration for the observation model in \eqref{Equation_2}, i.e., ${\bf s}={\bf y}$, ${\bf u}={\bf x}$, and
${\bf A}={\bf C}$. For the identical separable MMSE denoiser $\eta(v_i^t)=\mathbb{E}\{\cdot|v_i^t\}$ with prior distribution $U=X \sim f_{X}(x;P_{\rm{I}};K)$ in \eqref{Eq_PDF_DELTA}, by taking into account the decoupling behavior in \eqref{Decouple}, the $i$th element of the soft decoded vector at the $(t+1)$th iteration is given by
\setcounter{equation}{11}
\begin{align}\label{Equation_34}
\hat{x}_i^{t+1}&=\eta({{v_i^t}}) = \mathbb{E}\big{\{}X_i|X_i+\sigma_t Z=v_i^t \big{\}}
\\ \nonumber
&=
\frac{\sqrt{P_{\rm I}}pT_i^t}{pT_i^t+1-p},
\end{align}
where $Z \sim {\cal  N}(0,1)$, $T_i^t \triangleq {\exp\Big{\{}\frac{-P_{\rm{I}}+2v_i^t{\sqrt{P_{\rm{I}}}}}{2\sigma_t^2}\Big{\}}}$,
%\begin{align}
%T_i^t \triangleq {\exp\Big{\{}\frac{-P_{\rm{I}}+2v_i^t{\sqrt{P_{\rm{I}}}}}{2\sigma_t^2}\Big{\}}},
%\end{align}
and $p=1 / {2^K}$.
Proof of the general case for the prior \ac{pdf} in \eqref{qpdf} is given in Appendix \ref{Appx_2}.
Moreover, the divergence term at the $(t+1)$th iteration of the RBS-AMP MUD in \eqref{Equation_7} is obtained as
\vspace{-0.5em}
\begin{align}\label{Equation_36}
{\rm div}\big{(}{\bm{\eta}}_t ({\bf v}^{t})\big{)}&=\frac{1}{N}\sum_{i=1}^{N} \frac{\partial}{\partial v_i^t}\eta_t({{v_i^t}})
\\ \nonumber
&=\frac{1}{N}\sum_{i=1}^{N} \frac{\partial}{\partial v_i^t}\mathbb{E}\big{\{}X_i|X_i+\sigma_t Z=v_i^t \big{\}} \\ \nonumber
&=\frac{p(1-p){P_{\rm{I}}}}{\sigma_t^2 N}\sum_{i=1}^{N} \frac{T_i^t }{(pT_i^t+1-p)^2}.
\vspace{-5em}
\end{align}
\setcounter{equation}{13}
\vspace{-2em}
\subsection{MUE of the RBS-AMP MUD algorithm}
The degradation in bit error rate due to the presence of MA interference in AWGN channel can be measured by the multiuser asymptotic efficiency, which is defined as
\begin{align}\label{mue}
\xi =\frac{\sigma_{\rm w}^2}{\sigma_{\rm w}^2+I},
\end{align}
where $\sigma_{\rm w}^2$ is the variance of noise, and $I$ is the interference after soft decoding.
\begin{theorem}\label{Theorem_3}
The  MUE of the  RBS-AMP MUD algorithm at the $(t+1)$th iteration is given by
\begin{align}\label{7883}
\xi^{t+1}_{\rm RB}=\frac{\sigma_{\rm w}^2}{\sigma_{\rm w}^2+(\sigma^2_{t+1} - \sigma_{\rm w}^2)}=\frac{\sigma_{\rm w}^2}{\sigma^2_{t+1}},
\end{align}
where $\xi^{t+1}_{\rm RB} \in [0,1]$, and $\sigma^2_{t+1}$ is given in \eqref{Equation_18}-\eqref{Equation_19} for $U=X$ in \eqref{Eq_PDF_DELTA} (Proof in Appendix \ref{Appx_3}).
\end{theorem}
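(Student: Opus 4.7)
The plan is to exploit the AMP decoupling property, which turns each iteration into an effective scalar estimation problem whose Gaussian noise variance already lumps together the ambient channel noise and the residual multi-user interference left over from the previous denoising step.

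First, I would invoke the decoupling behavior in \eqref{Decouple}, specialized to the MUD observation model ${\bf y}={\bf C}{\bf x}+{\bf w}$, at the $(t{+}1)$th iteration: the pseudo-observation ${\bf v}^{t+1}=\hat{\bf x}^{t+1}+{\bf C}^T{\bf r}^{t+1}$ is asymptotically equivalent to ${\bf x}+\mathcal{N}({\bf 0},\sigma_{t+1}^2{\bf I})$. From the perspective of a single device, each coordinate therefore sees its own symbol corrupted by an effective Gaussian disturbance of variance $\sigma_{t+1}^2$, exactly as in an equivalent single-user AWGN channel with an inflated noise level.

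Next, I would split that effective variance via the SE recursion \eqref{Equation_18} with $U=X$, namely $\sigma_{t+1}^2=\sigma_{\rm w}^2+\beta^{-1}\mathbb{E}\{(\eta_t(X+\sigma_t Z)-X)^2\}$. The first summand is precisely the genuine AWGN variance of the G-MAC, while the second summand, by Theorem~\ref{Theorem_1}, equals the asymptotic per-component mean-squared reconstruction error of the soft decoder---i.e., the part of the disturbance in ${\bf v}^{t+1}$ which is not due to ${\bf w}$ but to the imperfect cancellation of the other $D-1$ devices after matched filtering through ${\bf C}^T$. Identifying this summand with the operational interference $I$ of \eqref{mue} immediately gives $I=\sigma_{t+1}^2-\sigma_{\rm w}^2$ and, upon substitution, $\xi^{t+1}_{\rm RB}=\sigma_{\rm w}^2/\sigma_{t+1}^2$. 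Non-negativity of the expectation in \eqref{Equation_18} yields $\sigma_{t+1}^2\geq\sigma_{\rm w}^2$, hence $\xi^{t+1}_{\rm RB}\in[0,1]$.

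The main obstacle is conceptual rather than computational: one must justify that the classical notion of ``interference after soft decoding'' in the Verd\'u-type MUE definition \eqref{mue} is legitimately identified with the SE increment $\sigma_{t+1}^2-\sigma_{\rm w}^2$. This requires arguing that (i) the decoupled model in \eqref{Decouple} is the correct single-user analogue for AMP-based MUD, and (ii) the factor $1/\beta$ appearing in \eqref{Equation_18} correctly absorbs the measurement-to-unknown ratio $M/N$ so that the residual denoiser error, measured on the coordinate scale by Theorem~\ref{Theorem_1}, re-emerges as an additive variance term on the pseudo-observation scale. Once this correspondence is in place, the proof reduces to combining \eqref{Decouple}, \eqref{Equation_18}, and \eqref{mue} algebraically.
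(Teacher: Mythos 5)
Your argument is correct and lands on the same key identity as the paper, $I=\sigma^2_{t+1}-\sigma_{\rm w}^2$, but it reaches it from the opposite side. You read the interference directly off the decoupled scalar channel \eqref{Decouple} and the SE recursion \eqref{Equation_18}, treating the increment $\beta^{-1}\mathbb{E}\{(\eta_t(X+\sigma_t Z)-X)^2\}$ as the residual multiuser interference by fiat. The paper's proof in Appendix~\ref{Appx_3} instead works in the measurement domain: it defines the interference operationally as the non-noise part of the average residual power $\frac{1}{M}\mathbb{E}\{\|{\bf y}-{\bf C}\hat{\bf x}^{t+1}\|^2\}$, i.e.\ $I^{t+1}=\frac{1}{M}\mathbb{E}\{({\bf x}-\hat{\bf x}^{t+1})^{\rm T}{\bf C}^{\rm T}{\bf C}({\bf x}-\hat{\bf x}^{t+1})\}$, then uses the asymptotic orthonormality ${\bf C}^{\rm T}{\bf C}\to{\bf I}$ together with Theorem~\ref{Theorem_1} (via \eqref{Equation_23}) to evaluate this as $\frac{N}{M}\cdot(\sigma^2_{t+1}-\sigma_{\rm w}^2)\beta=\sigma^2_{t+1}-\sigma_{\rm w}^2$. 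In other words, the step you correctly flag as the ``main obstacle''---justifying that the Verd\'u-type interference in \eqref{mue} coincides with the SE increment---is exactly what the paper supplies: the residual-power decomposition plus ${\bf C}^{\rm T}{\bf C}\to{\bf I}$ is the bridge between the coordinate-scale denoiser error of Theorem~\ref{Theorem_1} and the measurement-scale interference, and the factor $1/\beta$ you worried about is absorbed precisely by passing from $\frac{1}{N}\|{\bf x}-\hat{\bf x}^{t+1}\|^2$ to $\frac{1}{M}\|{\bf x}-\hat{\bf x}^{t+1}\|^2$. Your route is shorter and makes the single-user-equivalent-channel interpretation transparent; the paper's route buys the operational grounding of $I$ that your version leaves as an identification. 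With that bridge added, your proof is complete, and your observation that non-negativity of the expectation in \eqref{Equation_18} gives $\xi^{t+1}_{\rm RB}\in[0,1]$ is a detail the paper states but does not argue.
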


In practice, the SE, i.e., $\{\sigma_t^2\}_{t \geq 0}$ is tracked by the sample mean estimate $\hat{\sigma}_t^2 \approx \frac{1}{M} \|{\bf{r}}^t\|$, where ${\bf{r}}^t$ is given in \eqref{Equation_7} for ${\bf{s}}={\bf{y}}$, ${\bf{A}}={\bf{C}}$ and $\hat{\bf{u}}=\hat{\bf{x}}$.
The SE can also be theoretically obtained by using \eqref{Equation_23}. In this case,
from \eqref{Equation_23} and \eqref{Eq_PDF_DELTA}, we obtain \eqref{Equation_37} at the top of this page.
%\begin{align}\label{7883y}
%\hat{\xi}^{t+1}_{\rm RB}= \frac{M\sigma_{\rm w}^2}{\|{\bf{r}}^{t+1}\|}.
%\end{align}
%\begin{theorem}\label{Theorem_1}
%Let ${\bf{A}}\in \mathbb{R}^{M \times N}$ be a matrix
% with i.i.d. entries  $A_{ij} \sim \mathcal{N}(0,1/M)$, and assume $M/N \rightarrow \beta >0$.
%Consider
%a random signal ${\bf u} \in \mathbb{R}^N$, whose empirical distributions
%converge weakly to a \ac{pdf} $f_{U}$ on $\mathbb{R}$ with bounded moment of order $(2k-2)$ as $N \rightarrow \infty$ for some $k\geq2$.
%Then, for any pseudo-Lipschitz function $\psi (\cdot, \cdot): \mathbb{R}^2\rightarrow \mathbb{R}$
%of order $k$ and all $t\geq 0$,  we almost surely have  (Proof in \cite{bayati2011dynamics})
%\begin{align}\label{Equation_21}
%\mathop {\lim }\limits_{N \to \infty } \frac{1}{N} \sum_{i=1}^{N}\psi(\hat{u}_i^{t+1},u_i)=\mathbb{E}\big{\{}\psi(\eta_t(U+\sigma_t Z),U)\big{\}},
%\end{align}
%where $\hat{\bf u}^{t+1} \triangleq [\hat{u}_1^{t+1} \ \hat{u}_2^{t+1} \ \dots \ \hat{u}_N^{t+1}]^T$ is obtained by using AMP in \eqref{Equation_6}-\eqref{Equation_8}, $U \sim f_{U}$, $Z \sim {\cal N} ({{0}},\sigma_t^2)$ is independent of $U$, and $\sigma_t$ follows the SE in \eqref{Equation_18} and \eqref{Equation_19}.
%For $\psi(x,y)=(x-y)^2$, we have
%\begin{align}\label{Equation_23}
% \mathop {\lim }\limits_{N \to \infty } \frac{1}{N}  \big{\|}\hat{\bf{u}}^{t+1}-{\bf{u}}\big{\|}_2^2 & = \mathbb{E}\big{\{}(\eta_t(U+\sigma_t Z)-U)^2\big{\}} \\ \nonumber
%& =(\sigma^2_{t+1} - \sigma_{\rm w}^2)\beta.
%\end{align}
%\end{theorem}
\section{BS-AMP MUD Algorithm}
In this section,
we first in Theorem 2 show  that the per-iteration mean-squared error of the reconstructed
vector through AMP with non-separable denoiser
can be predicted using a state evolution (SE). Then, the non-separable denoiser for
the exact joint prior distribution of the sparse vector ${\bf x}$ in \eqref{qpdf} is obtained and is used for Bayesian AMP soft decoding.
This results in the BS-AMP MUD algorithm.
Theorem 2 will
be later used for the derivation of the MUE of the proposed
BS-AMP MUD algorithm.
\vspace{-1em}
\subsection{SE of AMP with Non-separable Denoiser}
Let us consider a block structure for the vector ${\bf{u}}$ in \eqref{Equation_5} as ${\bf{u}}=[{\bf{u}}_1^{\rm {T}},{\bf{u}}_2^{\rm {T}},\cdots,{\bf{u}}_L^{\rm {T}}]$, where ${\bf{u}}_i$ and ${\bf{u}}_j$, $i\neq j$ are independent random vectors. We consider the following cases:
1) the joint \ac{pdf} of ${\bf{u}}_i \triangleq [u_{i,1},u_{i,2},\dots,u_{i,\ell}]^{\rm T}$ is ${\bf{U}}=[U_1 \ U_2 \ \dots \ U_{\ell} ]^{\rm T} \sim f_{\bf{U}}$, and 2) the empirical joint
\ac{pdf} of ${\bf{u}}_i$ converges weakly to $f_{\bf U}$, where $\ell=N/L$.
AMP with identical non-separable LC denoiser in \eqref{equation_13} with $\tilde{\bm \eta}_t(\cdot): \mathbb{R}^\ell \rightarrow \mathbb{R}^\ell$ $({\bm \eta}_{t,m}(\cdot)=\tilde{\bm \eta}_t(\cdot), m=1,2,\dots,L)$, reconstructs the vector ${\bf{u}}$ from the observation model in \eqref{Equation_5} through the relations in
\setcounter{equation}{16}
\eqref{Equation_6}-\eqref{Equation_8}. Also, the decoupling behavior occurs for the $m$th block as
\begin{align}\label{ddnlawqm}
{\bf v}_m^t= {\bf{u}}_m + {\bf z}_m^t,\,\,\,\,\,\,\ m=1,2,\dots,L,
\end{align}
where
\begin{align}\label{SEV}
{\bf z}_m^t \sim N\big{(}{{0}},\rho_t^2{\bf {I}}\big{)}.
\end{align}
The SE i.e., the sequence of $\{\rho_t^2\}_{t \geq 0}$ in \eqref{SEV} is given by
\begin{subequations}\label{eerefd157766}
\begin{align}
\rho^2_{t+1} &= \sigma_{\rm w}^2+\frac{1}{\ell \beta}\mathbb{E}\Big{\{}\big{\|}\tilde{\bm{\eta}}_t({\bf{U}}+\rho_t {\bf{Z}})-{\bf U} \big{\|}^2 \Big{\}},  \label{Equation_27}
\\
\rho^2_{0} &= \sigma_{\rm w}^2+\frac{1}{\ell \beta}\mathbb{E}\big{\{}{\|}{\bf U}{\|}^2\big{\}},  \label{Equation_28}
\end{align}
\end{subequations}
%\begin{align}\label{Equation_27}
%\rho^2_{t+1} = \sigma_{\rm w}^2+\frac{1}{\ell \beta}\mathbb{E}\Big{\{}\big{\|}\tilde{\bm{\eta}}_t({\bf{U}}+\rho_t {\bf{Z}})-{\bf U} \big{\|}^2 \Big{\}},
%\end{align}
%\begin{align}\label{Equation_28}
%\rho^2_{0} = \sigma_{\rm w}^2+\frac{1}{\ell \beta}\mathbb{E}\big{\{}{\|}{\bf U}{\|}^2\big{\}},
%\end{align}
where ${\bf Z}=[Z_1 \ Z_2 \ \dots \ Z_\ell]^{\rm T} \sim N({{0}},{\bf {I}})$ is independent of ${\bf U}=[U_1 \ U_2 \ \dots \ U_\ell]^{\rm T}\sim P_{\bf{U}}$.
A modified version of Theorem~1 for identical non-separable LC denoiser is given as follows:
\begin{theorem}\label{Theorem_2}
Let ${\bf{A}}\in \mathbb{R}^{M \times N}$ be a matrix
with \ac{iid} entries  $A_{ij} \sim \mathcal{N}  (0,1/M)$, and assume $M/N\rightarrow \beta >0$.
Consider
a block sparse signal ${\bf{u}}=\big{[}{\bf{u}}_1^{\rm T} \ {\bf{u}}_2^{\rm T} \dots \ {\bf{u}}_L^{\rm T}\big{]}^{\rm T}$, where the joint distribution
of each block of length $\ell$ is $f_{\bf U}$ (or the empirical joint
\ac{pdf} converges weakly to $f_{\bf U}$) on $\mathbb{R}^{\ell}$.
Then, we have
\begin{align}\label{Equation_29}
\hspace{-1em} \mathop {\lim }\limits_{L \to \infty } \hspace{-0.21em} \frac{1}{L} \sum_{u=1}^{L} \big{\|}\hat{\bf u}_u^{t+1}-{\bf u}_u\big{\|}^2  \hspace{-0.15em} = \hspace{-0.15em} \mathbb{E}\Big{\{}\big{\|}\tilde{\bm{\eta}}_t({\bf{U}}+\rho_t {\bf{Z}})-{\bf U}\big{\|}^2 \Big{\}},
\end{align}
where $\hat {\bf{u}}=\big{[}\hat {\bf{u}}_1^{\rm T} \ \hat {\bf{u}}_2^{\rm T} \dots \ \hat {\bf{u}}_L^{\rm T}\big{]}^{\rm T}$ is obtained by using \eqref{Equation_6}-\eqref{Equation_8},  ${\bf U} \sim f_{\bf U}$, ${\bf Z} \sim N({{0}},{\bf {I}})$ is independent of ${\bf U}$, and $\rho_t$ follows the  SE in \eqref{Equation_27} and \eqref{Equation_28} (Proof in Appendix \ref{Appx_1}).
\end{theorem}

\setcounter{equation}{22}
\begin{figure*}
\vspace{-1.5em}
\begin{align}\label{wenm12ereertw}
\hspace{-2.2em}
{\rm div}& \big{(}{\bm{\eta}}_t ({\bf v}^{t})\big{)}=
\frac{1}{N} \sum_{u=1}^{D}
 {\rm Tr} \dfrac{\partial {\tilde{\bm{\eta}}_t ({\bf v}_u^{t})}}{\partial {\bf v}_u^{t}}
=\frac{\sqrt{P_{\rm{I}}}}{N}\sum_{u=1}^{D} \frac{1}{{\sum_{n=1}^{2^K} \exp\Big{\{}\frac{-\|{\bf v}_u^t-\sqrt{P_{\rm I}}{\bf{e}}_n\|^2_2}{2\rho_t^2}\Big{\}}}}  \sum_{k=1}^{2^K}
{\frac{(\sqrt{P_{\rm{I}}}-v_{u,k})}{\rho_t^2}\exp\Big{\{}\frac{-\|{\bf v}_u^t-\sqrt{P_{\rm I}}{\bf{e}}_k\|^2_2}{2\rho_t^2}\Big{\}}}
\\ \nonumber
&\hspace{-2em} -\frac{\sqrt{P_{\rm{I}}}}{N}\sum_{u=1}^{D} \frac{1}{\Big{[}{\sum_{n=1}^{2^K}
\exp\Big{\{}\frac{-\|{\bf v}_u^t-\sqrt{P_{\rm I}}{\bf{e}}_n\|^2_2}{2\rho_t^2}\Big{\}}}\Big{]}^2}   {{\sum_{k=1}^{2^K}
\exp\Big{\{}\frac{-\|{\bf v}_u^t-\sqrt{P_{\rm I}}{\bf{e}}_k\|^2_2}{2 \rho_t^2}\Big{\}}
\sum_{m=1}^{2^K} \frac{(\sqrt{P_{\rm I}} \delta[k-m]-v_{u,k})}{\rho_t^2}
\exp\Big{\{}\frac{-\|{\bf v}_u^t-\sqrt{P_{\rm I}}{\bf{e}}_m\|^2_2}{2 \rho_t^2}\Big{\}}}}.
\end{align}
\noindent
\rule{19cm}{0.4pt}
\vspace{-2.2em}
\end{figure*}

\begin{figure}[!h]
\vspace{-1.1em}
\centering
    \includegraphics[height=1.8in]{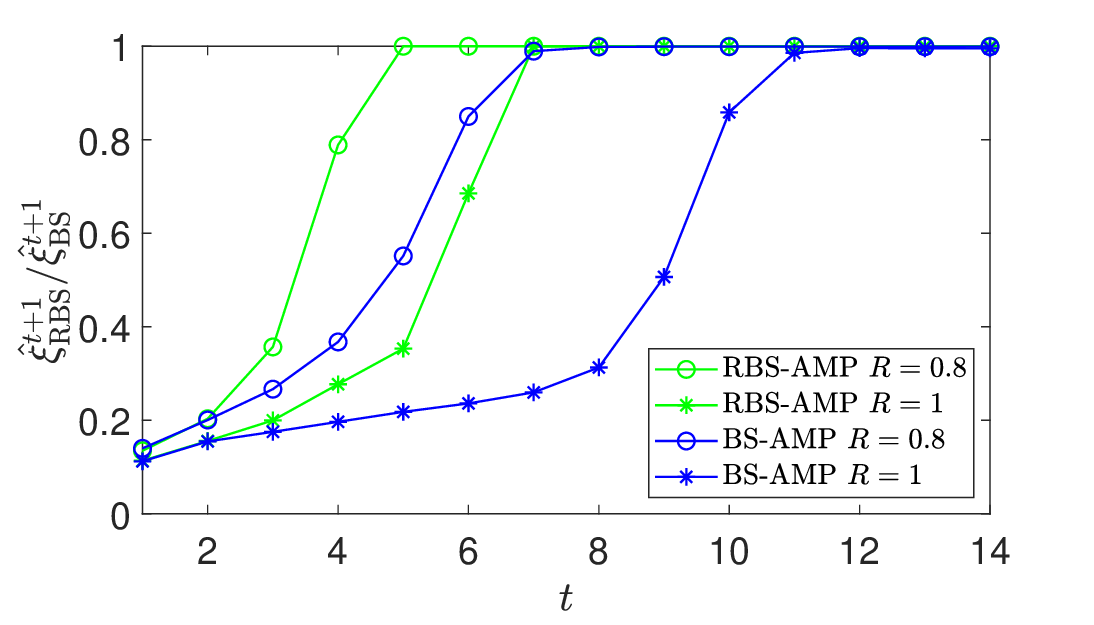}
    \vspace{-0.75em}
    \caption{State evolution of the RBS-AMP and BS-AMP MUD algorithms for $K=8$ and $D=256$ at ${E_{\rm b}}  / {N_0}=6$ dB.} \label{fig2}
\end{figure}
\begin{figure}[!h]
 \vspace{-1.2em}
\centering
    \includegraphics[height=2.5in]{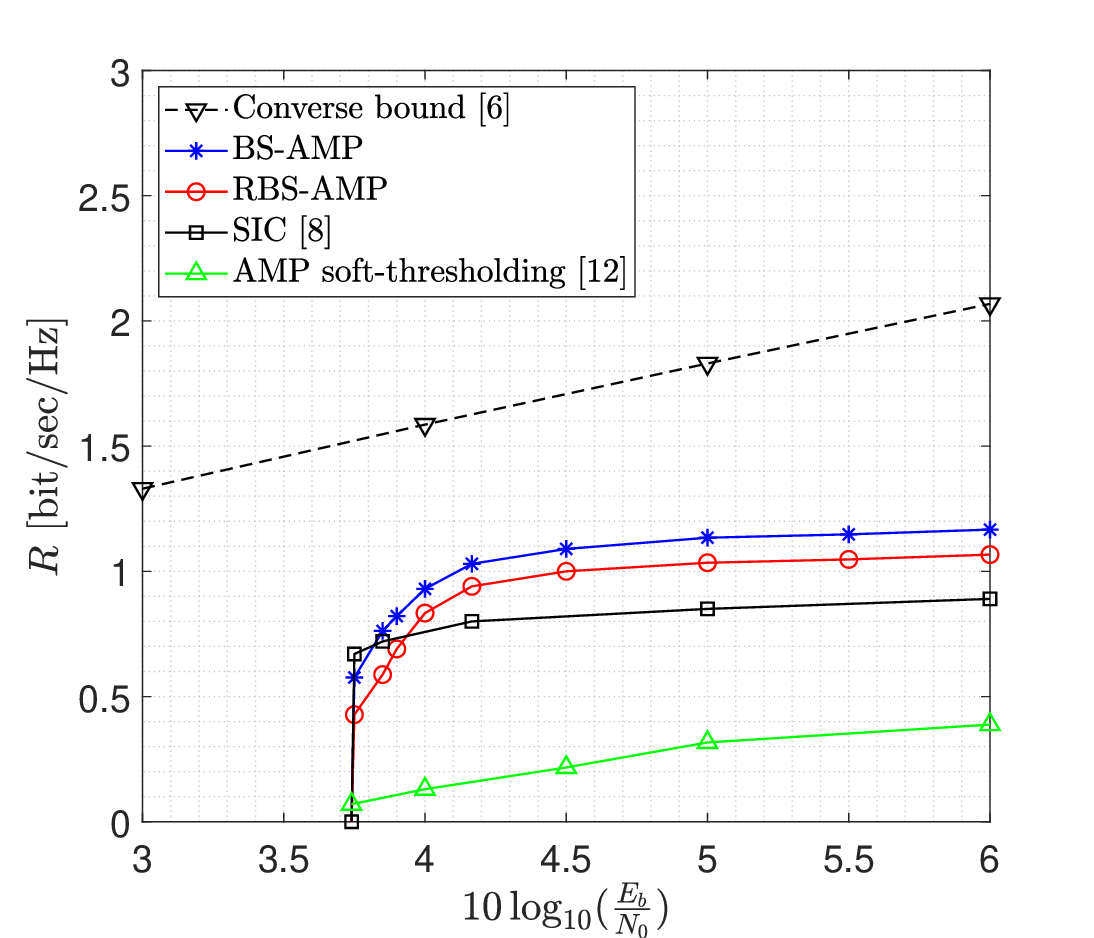}
    \caption{Spectral efficiency versus $10 \log_{10} ({E_{\rm b}}  / {N_0})$
 for the average block error rate $P_{\rm{e}}=15 \times 10^{-4}$, $K=8$, and $D=256$.} \label{F2}
 \vspace{-1.6em}
\end{figure}
\setcounter{equation}{20}
\vspace{-1em}
\subsection{BS-AMP MUD Algorithm with Non-separable Denoiser}
By considering the exact  prior distribution of
$\bf{x}$ in \eqref{qpdf},
%we develop the BS-AMP in the Bayesian framework.
%used to account for the prior information to improve decoding performance.
%For the problem at hand, the exact prior distribution is given in \eqref{qpdf}.
%In this case,
 AMP with block non-seperable denoiser
can be employed to reconstruct the vector ${\bf x}=[{\bf x}_1^{\rm T}, {\bf x}_2^{\rm T}, \dots,{\bf x}_{D}^{\rm T}]^{\rm T}$ from the underdetermined system of linear equations ${\bf y}={\bf C}{\bf x}+{\bf w}$ in
\eqref{Equation_2} by using the relations \eqref{Equation_6}-\eqref{Equation_8}. For identical block non-separable \ac{mmse} denoiser $\tilde{\bm \eta}({\bf v}_u^t)= \mathbb{E}\{\cdot|{\bf v}_u^t\} :$ $\mathbb{R}^{{2^{K}}} \rightarrow \mathbb{R}^{{2^{K}}}$, the reconstructed vector for the $u$th \ac{iot} device at the $(t+1)$th iteration is given as follows
\begin{align}\label{bay}
\hat{\bf x}_u^{t+1}=\mathbb{E}\big{\{}{\bf{X}}|{\bf{X}}+\rho_t {\bf{Z}}={\bf v}_u^t \big{\}},
\end{align}
where ${\bf{X}}=[X_1, X_2, \cdots, X_{2^K}]^{\rm T}$ is a random vector with the joint \ac{pdf} in \eqref{qpdf} and ${\bf v}_u^t= \hat{\bf{x}}_u^t + {\bf C}^{\rm T}{\bf r}^t={\bf{x}}_u + {\cal{N}}(0,\rho_t^2 {\bf I})$ due to the decoupling affect as it was explained in \eqref{ddnlawqm} in \eqref{SEV}.

By employing \eqref{qpdf} and the fact that ${\bf Z} \sim \mathcal{N}({\bf{0}},{\bf I})$, $\hat{\bf x}_u^{t+1}$ in \eqref{bay} is obtained as
\begin{align}\label{eq25nv}
\hat{\bf x}_u^{t+1} & =\tilde{\bm{\eta}} ({\bf v}_u^{t})=\mathbb{E}\big{\{}{\bf{X}}|{\bf{X}}+\rho_t {\bf{Z}}={\bf v}_u^t \big{\}} \\ \nonumber
&=\frac{\sqrt{P_{\rm{I}}}\sum_{k=1}^{2^K} \exp\Big{\{}\frac{-\|{\bf v}_u^t-\sqrt{P_{\rm I}}{\bf{e}}_k\|^2_2}{2\rho_t^2}\Big{\}}{\bf e}_k}
{\sum_{m=1}^{2^K} \exp\Big{\{}\frac{-\|{\bf v}_u^t-\sqrt{P_{\rm I}}{\bf{e}}_m\|^2_2}{2\rho_t^2}\Big{\}}},
\end{align}
where ${\bf e}_1,{\bf e}_2, \cdots, {\bf e}_{2^K}$ are orthonormal basis % in right-handed Cartesian coordinate system.
(Proof in Appendix \ref{Appx_2}).

By using \eqref{Equation_14} and \eqref{eq25nv} and after some simplifications, the divergence for the MMSE non-separable denoiser is derived as in \eqref{wenm12ereertw}.
The RBS-AMP MUD algorithm offers a lower complexity compared to the BS-AMP MUD algorithm due to the simple separable \ac{mmse} denoiser. On the other hand, the BS-AMP MUD algorithm achieves a higher spectral efficiency compared to the RBS-AMP MUD algorithm because of the non-separable \ac{mmse} denoiser
that uses the exact joint \ac{pdf}.
%In this section, By employing the  we propose RBS-AMP and BS-AMP MUD algorithms for G-MAC.
%The former AMP-based MUD algorithm employs separable \ac{mmse} denoiser and offers lower complexity, and the latter algorithm uses block non-seperable \ac{mmse} denoiser and achieve higher decoding performance.
%Proof is given in Appendix \ref{Appx_2}.
\vspace{-1em}
{\subsection{MUE of the BS-AMP MUD algorithm}}
At the $(t+1)$th iteration of the BS-AMP algorithm,
 we have $\hat{\bf x}^{t+1} \triangleq [\hat{\bf x}_1^{t+1} \ \hat{\bf x}_2^{t+1} \ \dots \ \hat{\bf x}_{D}^{t+1}]^{\rm T}$, where $\hat{\bf x}_u^{t+1}$ is given \eqref{eq25nv}. Similar to the RBS-AMP algorithm in Appendix \ref{Appx_3}, we can obtain the MUE for the BS-AMP algorithm.
For $D\rightarrow \infty$, by taking into account Theorem 2 for $L=D$ and $\ell=2^K$, using the fact that ${\bf{C}}^{\rm T} {\bf{C}}={\bf I}$ ($N\rightarrow \infty$, $M\rightarrow \infty$), and employing \eqref{Equation_27} and \eqref{Equation_29}, we can write the  interference term as $I^{t+1}
=(\rho^2_{t+1} - \sigma_{\rm w}^2)$,
%\begin{align}\label{eq34e}
%I^{t+1}
%=(\rho^2_{t+1} - \sigma_{\rm w}^2),
%\end{align}
where $\rho^2_{t+1} = \sigma_{\rm w}^2+\frac{1}{2^K\beta}\mathbb{E}\big{\{}\|\tilde{\bm{\eta}}({\bf{X}}+\rho_t {\bf{Z}})-{\bf X}\|^2 \big{\}}$
%\begin{align}\label{90909}
%\rho^2_{t+1} = \sigma_{\rm w}^2+\frac{1}{2^K\beta}\mathbb{E}\big{\{}\|\tilde{\bm{\eta}}_t({\bf{X}}+\rho_t {\bf{Z}})-{\bf X}\|^2 \big{\}},
%\end{align}
and $\rho^2_{0} = \sigma_{\rm w}^2+\frac{1}{2^K \beta}\mathbb{E}\{\|{\bf X}\|^2\}$.
%\begin{align}\label{equiuiuc}
%\rho^2_{0} = \sigma_{\rm w}^2+\frac{1}{2^K \beta}\mathbb{E}\{\|{\bf X}\|^2\}.
%\end{align}
%The joint \ac{pdf} of the random vector ${\bf X} =[X_1, X_2,\dots,X_{2^K}]^T$ is given in \eqref{qpdf}.
By substituting $I^{t+1}$
 into
\eqref{mue}, we obtain the MUE of the  BS-AMP MUD algorithm at the $(t+1)$th iteration as follows
\setcounter{equation}{23}
\begin{align}
\xi_{\rm BA}^{t+1}=\frac{\sigma_{\rm w}^2}{\sigma_{\rm w}^2+(\rho^2_{t+1} - \sigma_{\rm w}^2)}=\frac{\sigma_{\rm w}^2}{\rho^2_{t+1}}.
\end{align}
\section{Simulation Results}
In practice, the sample mean estimate of $\{\sigma_t^2\}_{t \geq 0}$ and $\{\rho_t^2\}_{t \geq 0}$, i,e., $\frac{1}{M} \|{\bf{r}}^t\|$ is employed to track the MUE of the  RBS-AMP and the BS-AMP MUD algorithms as $\frac{M\sigma_{\rm w}^2}{\|{\bf{r}}^{t+1}\|}$.
In Fig.~\ref{fig2}, we show the estimated MUE for $K=8$ and $D=256$ at ${E_{\rm b}}  / {N_0}=6$ dB.
As seen, $\hat{\xi}_{\rm RB}^{t+1}=1$ and $\hat{\xi}_{\rm BA}^{t+1}=1$  are achieved. Also, the lower the spectral efficiency $R$ the faster the decoding.

In Fig.~\ref{F2}, we show the spectral efficiency $R$ [bits/sec/Hz] versus $10 \log_{10} ({E_{\rm b}}  / {N_0})$
 for average block error rate $P_{\rm{e}}=15 \times 10^{-4}$, $K=8$, and $D=256$.
 We  also illustrate the spectral efficiency of the AMP with soft-thresholding \cite{donoho2009message}, the SIC decoding in \cite{muller2021massive}, and the upper bound  in \cite{zadik2019improved}.
As expected, the BS-AMP outperforms the RBS-AMP since the non-sparable MMSE denoiser in more efficiently suppresses noise.
Moreover, as seen, both algorithms outperform the AMP with soft-thresholding and the SIC. For packets of $8$ information bits,
the BS-AMP and RBS-AMP MUD algorithms reach $4/7$ and $1/2$ of the upper bound at $E_b/N_0=4$ dB, respectively.
%By dividing \ac{iot} devices into two groups with different powers, the performance of the proposed MUD algorithms can further improve.
%The reason is that  irregularity aids the convergence of iterative systems.
\vspace{-0.2em}
\section{Conclusion}
Two new MUD algorithms for massive G-MAC with short-packet transmission were developed in this letter.
Our proposed MUD algorithms were developed based on AMP with non-separable and separable MMSE denoisers.
Our results show that the proposed MUD algorithms
offer superior soft decoding performance compared to the state-of-the-art AMP with soft-threshold denoising.
This higher spectral efficiency is achieved with low computational complexity for  massive G-MAC with short-packet transmission due to  decoupling effect.
\appendices
\vspace{-0.5em}
\section{}\label{Appx_2}
For the proposed BS-AMP MUD with block non-separable \ac{mmse}  denoiser with the prior \ac{pdf} in \eqref{qpdf}, we have
\begin{align} \nonumber
\hat{\bf x}_u^{t+1}&=\mathbb{E}\big{\{}{\bf{X}}|{\bf{X}}+\rho_t {\bf{Z}}={\bf v}_u^t \big{\}}
 = \int_{\mathcal{R}^{2^K}} \hspace{-0.3em} {\bf{x}}_u f_{{\bf X}|{\bf V}}( {\bf x}_u| {\bf v}_u^t ;K) \  {\rm d}  {\bf{x}}_u
\\ \nonumber
&  = \int_{\mathcal{R}^{2^K}} {\bf x}_u \frac{f_{{\bf V}|{\bf X}}( {\bf v}_u^t| {\bf x}_u;K)f_{\bf X}({\bf x}_u;P_{\rm{I}},K)}{\int_{\mathcal{R}^{2^K}}
f_{{\bf V}|{\bf X}}( {\bf v}_u^t| {\bf z} ;K)f_{\bf X}({\bf z};P_{\rm{I}},K) {\rm d}  {\bf{z}} }  {\rm d}  {\bf{x}}_u \\ \label{uobn12}
& =  \sqrt{P_{\rm{I}}} \sum_{k=1}^{2^K}  \frac{f_{{\bf V}|{\bf X}}( {\bf v}_u^t| \bar{\bf e}_k ;K)}{\sum_{i=m}^{2^K}
f_{{\bf V}|{\bf X}}( {\bf v}_u^t| \bar{\bf e}_m ;K)} {\bf e}_k,
\end{align}
where $f_{{\bf V}|{\bf X}}( {\bf v}_u^t| {\bf x} ;K)$ $\big{(}f_{{\bf X}|{\bf V}}( {\bf x}| {\bf v}_u^t ;K)\big{)}$ denotes the conditional \ac{pdf} of the random vector ${\bf v}_u^t$ (${\bf x}$) given random vector ${\bf x}$ (${\bf v}_u^t$), and $\bar{\bf e}_k \triangleq \sqrt{P}_{\rm I}  {\bf e}_k$. The last equality in \eqref{uobn12} is obtained by replacing \eqref{qpdf} in the third equality and then using $\int_{\mathcal{R}^{2^K}} \delta ({\bf{x}}-{\bf{x}}_0) {\rm d}  {\bf{x}} =1$.
Moreover,
due to the decoupling effect, we can write ${\bf v}_u^t= {\bf{x}}_u + {\cal{N}}(0,\rho_t^2 {\bf I})$;
thus, we have
\begin{align}\label{89iop}
f_{{\bf V}|{\bf X}}( {\bf v}_u^t| {\bf x}_u ;K) =  \frac{1}{\eta} \exp\bigg{\{}\frac{-\|{\bf v}_u^t-{\bf x}_u\|^2}{2\rho_t^2}\bigg{\}},
\end{align}
where $\eta \triangleq (2\pi\rho_t^2)^{2^{K-1}}$.
Finally, by substituting \eqref{89iop} into \eqref{uobn12}, we obtain \eqref{eq25nv}.
%\begin{align}\label{89565}
%f_{{\bf V}|{\bf X}}( {\bf v}_u^t| {\bf e}_k ;P_{\rm{I}},K) = \sum_{k=1}^{2^K} \exp\Big{\{}\frac{-\|{\bf v}_u^t-\sqrt{P_{\rm I}}{\bf{e}}_k\|^2_2}{2\rho_t^2}\Big{\}}.
%\end{align}
Similarly, for the RBS-AMP MUD algorithm, we can write
\begin{align}\label{eq 57}
&\hat{x}_i^{t+1}  = \eta({{v_i^t}}) = \mathbb{E}\big{\{}X_i|X_i+\sigma_t Z=v_i^t \big{\}}\\ \nonumber
& = \int_{-\infty}^{+\infty} x_i f_{{ X}|{ V}}\big{(} x_i| {v}_i^t ;K) \ {\rm d} x_i \\ \nonumber
& =\frac{p \sqrt{P_{\rm I}}f_{{ V}|{ X}}( {v}_i^t| x_i=\sqrt{P_{\rm{I}}} ;K)}{pf_{{ V}|{ X}}( {v}_i^t| x_i=\sqrt{P_{\rm{I}}} ;K)+(1-p)f_{{ V}|{ X}}( {v}_i^t| x_i=0 ;K)},
\end{align}
where $p=1/2^K$. By substituting $f_{{ V}|{ X}}( {v}_i^t| x_i ;K) = \frac{1}{\sqrt{2\pi}} \exp\big{\{}{-(v_i^t-x_i)^2} / {2\sigma_t^2}\big{\}}$
%\begin{align}\label{eq 58}
%f_{{ V}|{ X}}( {v}_i^t| x_i ;K) &= \frac{1}{\sqrt{2\pi}} \exp\Big{\{}\frac{-(v_i^t-x\sqrt{P_{\rm I}})^2}{2\sigma_t^2}\Big{\}}
%\end{align}
into \eqref{eq 57}, and after some simplification, we obtain \eqref{Equation_34}.
\vspace{-1em}
\section{}\label{Appx_3}
At the $(t+1)$th iteration of the RBS-AMP algorithm, we have $\hat{\bf x}^{t+1} \triangleq [\hat{x}_1^{t+1} \ \hat{x}_2^{t+1} \ \dots \ \hat{x}_N^{t+1}]^{\rm T}$, where $x_i^{t+1}$ is given in \eqref{Equation_34}. By employing \eqref{Equation_2} and the fact that ${\bf w}$ is independent of ${\bf x}$ and ${\bf v}_u^t$ (due to decoupling), the average reconstruction error $\mathcal{E}$ at the $(t+1)$th iteration of the RBS-AMP algorithm can be written as
\begin{align}\label{Equation_39}
{\mathcal{E}}^{t+1}&=\frac{1}{M}\mathbb{E}\big{\{}({\bf{y}}-{\bf{C}}\hat{\bf x}^{t+1})^{\rm T} ({\bf{y}}-{\bf{C}}\hat{\bf x}^{t+1})\big{\}} \\ \nonumber
&=\frac{1}{M}\mathbb{E}\big{\{}{\bf{w}}^{\rm T}{\bf{w}}\}+\frac{1}{M}\mathbb{E}\{({\bf{x}}-\hat{\bf{x}}^{t+1})^{\rm T}{\bf{C}}^{\rm T}{\bf{C}}({\bf{x}}-\hat{\bf{x}}^{t+1})\big{\}}.
\end{align}
The first term in \eqref{Equation_39} denotes the noise power $\sigma_{\rm w}^2$, and the second term is the  interference  at the $(t+1)$th iteration, i.e., $I^{t+1}$.
By using \eqref{Equation_23}, for $N \rightarrow \infty$, we can write
\begin{align} \nonumber
&\frac{1}{N}\mathbb{E}\big{\{}({\bf{x}}-\hat{\bf{x}}^{t+1})^{\rm T}({\bf{x}}-\hat{\bf{x}}^{t+1})\big{\}}=\frac{1}{N} ({\bf{x}}-\hat{\bf{x}}^{t+1})^{\rm T}({\bf{x}}-\hat{\bf{x}}^{t+1})
\\ \label{uiuio232}
&=\mathbb{E}\big{\{}(\eta(X+\sigma_t Z)-X)^2\}= (\sigma^2_{t+1} - \sigma_{\rm w}^2)\beta,
\end{align}
where $\beta= M/N$ and
$\sigma^2_{t+1}$
%\in [\sigma_{\rm w}^2, \infty]$
is given in \eqref{Equation_18} by replacing $U$ with $X$.
Moreover, for $N\rightarrow \infty$ and $M\rightarrow \infty$, we have ${\bf{C}}^{\rm T} {\bf{C}}={\bf I}$.
Considering this equality and \eqref{uiuio232}, we can write
\begin{align}\label{eq349}
I^{t+1}&=\frac{1}{M}\mathbb{E}{\big \{}({\bf{x}}-\hat{\bf{x}}^{t+1})^{\rm T}{\bf{C}}^{\rm T}{\bf{C}}({\bf{x}}-\hat{\bf{x}}^{t+1})\big{\}}
\\ \nonumber
&=\frac{1}{M}({\bf{x}}-\hat{\bf{x}}^{t+1})^{\rm T}({\bf{x}}-\hat{\bf{x}}^{t+1})
=(\sigma^2_{t+1} - \sigma_{\rm w}^2).
\end{align}
By substituting \eqref{eq349} into \eqref{mue}, we obtain the  MUE of the  RBS-AMP MUD algorithm at the $(t+1)$th iteration as in \eqref{7883}.
\vspace{-1em}
\section{}\label{Appx_1}
For the proof, we need to show that the sample mean estimator in \eqref{Equation_29} is consistent \cite{stuart1963advanced}.
Let us define random variable
$
Q_t \triangleq \big{\|} \tilde{\bm \eta}_t\big{(} {\bf{U}}+\rho_t {\bf Z} \big{)}-{\bf U}\big{\|}^2$
with finite mean $\theta_t \triangleq \mathbb{E}\{Q_t\}$. The sample mean estimate of $\theta_t$ is given by
$\hat{\theta}_{t,L} =\frac{1}{L} \sum_{l=1}^{L} q_{t,l}=\frac{1}{L} \sum_{u=1}^{L}\|\hat{\bf u}_u^{t+1}-{\bf u}_u\|^2,
$
where $q_{t,l}$, $l=1,2,\dots,L$, are \ac{iid} drawn from random variable $Q_t$, and the second inequality comes from the decoupling behavior in \eqref{ddnlawqm} and \eqref{SEV} and the denoising operation in \eqref{Equation_6}.
By using Markov's inequality, we can write
\begin{align}\label{markov}
P\big{(}|\hat{\theta}_{t,L}-\theta_t|>\epsilon_t\big{)}
%&=P(|\hat{\theta}_{t,L}-\theta_t|^2>\epsilon_t^2)
%\\ \nonumber
\leq \frac{\mathbb{E}\{|\hat{\theta}_{t,L}-\theta_t|\}}{\epsilon_t}.
\end{align}
For the sample mean estimator $\hat{\theta}_{t,L}$, we have
\begin{align}\label{total_var}
|\hat{\theta}_{t,L}-\theta_t|^2 = \Big{|}\frac{1}{L} \sum_{l=1}^{L} q_{t,l}-\theta_t \Big{|}^2 \leq
\frac{1}{L^2} \sum_{l=1}^{L} |q_{t,l}-\theta_t|^2.
\end{align}
By applying the statistical expectation to both sides of \eqref{total_var}, and then using \eqref{markov}, we obtain \cite{liu2018sparse}
\begin{align}\label{tywop}
P(|\hat{\theta}_{t,L}-\theta_t|^2>\epsilon_t^2) \leq \frac{\sigma_{Q_t}^2}{L\epsilon_t^2},
\end{align}
where $\mathbb{E}\{(Q_t-\theta_t)^2\}=\sigma_{Q_t}^2=\mathbb{E}\{|q_{t,l}-\theta_t|^2\}$.
The right-hand side of \eqref{tywop} goes to zero as $L\rightarrow \infty$. Hence,
for any small positive values $\epsilon_t$, we can write
\begin{align}
\mathop {\lim }\limits_{L \to \infty } P{\bigg (}\Big{|}\frac{1}{L} \sum_{u=1}^{L}\|\hat{\bf u}_u^{t+1}-{\bf u}_u\|^2-\theta_t\Big{|}^2 \geq \epsilon_t^2 \Big{)}=0.
\end{align}
\vspace{-1.8em}
\bibliographystyle{IEEEtran}
\balance
\bibliography{IEEEabrv,Ref}

% that's all folks
\end{document}